\pgfplotsset{width=7.5cm,compat=1.12}
\newsavebox{\measure@tikzpicture}
  \def\tikz@width{#1}%
\newcommand{\commentout}[1]{}
\newtheorem*{theorem}{Theorem}
\newtheorem{remark}{Remark}
\newcommand*{\sigx}{\ensuremath{\sigma^{x}}}
\newcommand*{\sigz}{\ensuremath{\sigma^{z}}}
\newcommand*{\x}{\mathbf{x}}
\newcommand*{\E}{\mathbb{E}}
\begin{document}

\title{Estimating the Density of States of Boolean Satisfiability Problems on Classical and Quantum Computing Platforms}

\author{\name Tuhin Sahai \; Anurag Mishra  \\
\addr United Technologies Research Center,\\
Berkeley, CA 94705.
\AND
\name Jos\'{e} Miguel Pasini \\
\addr United Technologies Research Center,\\
East Hartford, CT 06118.
\AND
\name Susmit Jha \\
\addr Computer Science Laboratory\\
SRI International,\\
Menlo Park, CA 94025.
}


\maketitle

\begin{abstract}
Given a Boolean formula $\phi(x)$ in conjunctive normal form (CNF), the density of states counts the number of variable assignments that violate exactly $e$ clauses, for all values of $e$. Thus, the density of states is a histogram of the number of unsatisfied clauses over all possible assignments. This computation generalizes both maximum-satisfiability (MAX-SAT) and model counting problems and not only provides insight into the entire solution space, but also yields a measure for the \emph{hardness} of the problem instance. Consequently, in real-world scenarios, this problem is typically infeasible even when using state-of-the-art algorithms. While finding an exact answer to this problem is a computationally intensive task, we propose a novel approach for estimating density of states based on the concentration of measure inequalities. The methodology results in a quadratic unconstrained binary optimization (QUBO), which is particularly amenable to quantum annealing-based solutions. We present the overall approach and compare results from the D-Wave quantum annealer against the best-known classical algorithms such as the Hamze-de Freitas-Selby (HFS) algorithm and satisfiability modulo theory (SMT) solvers. 

\end{abstract}

\section{Introduction}
\label{introduction}
The density of states (DOS) for a given Boolean formula not only provides insight into the complete solution space but also serves as an accurate measure of the difficulty or hardness of the problem instance. 
The ability to compute DOS for optimization and feasibility problems has critical applications in system-requirements engineering of complex aerospace products. It provides a metric for requirements engineers to compare constraints, prescribed requirements~\cite{ferrante2016methodology,johnson1998178b,costello1995metrics}, and requirements decompositions~\cite{kirkman1998requirement}. 
This computation is particularly germane to the design and optimization of complex aerospace systems~\cite{sommerville2005integrated}. 
Current classical methods for computing density of states~\cite{wang2001efficient,ermon2010computing,ermon2011flat} have limited scalability.
While the focus of this paper is on Boolean formulae, we note that constrained programming and feasibility problems can be easily mapped to equivalent Boolean satisfiability instances~\cite{walsh2000sat,tamura2009compiling}.

The DOS problem in the standard $k$-satisfiability ($k$-SAT) setting can be elucidated as follows: instead of deciding whether a given logical formula is satisfiable or not, one aims to compute the \emph{entire histogram of the number of clauses satisfied over all possible variable assignments}. Note  the DOS, for a given instance of an optimization or decision problem, captures its hardness (distributions with a low footprint for all satisfied clauses are harder to compute or satisfy). The DOS histogram sheds light on the fundamental nature of the feasible solution set and difficulty of solving the overall optimization problem. As stated earlier, these problems frequently arise when constructing complex systems for aerospace and defense applications~\cite{leveson2009software}.

The lack of methodical approaches that enable the comparison of competing safety-critical system requirements, while optimizing performance, stymie the development of next-generation complex systems. Note there are often multiple paths to decompose the overall system safety requirements down to subsystems requirements. Some of these decompositions may lead to costly design and redesign cycles to achieve desired levels of performance. Decompositions that have a higher DOS in the satisfiable range result in greater freedom to optimize performance and, consequently, result in quicker design cycles and fewer redesigns. 
The ability to quickly estimate the DOS of satisfiability problems will enable the specification engineer to ensure the prescribed requirements are satisfiable, internally consistent, and amenable to design space exploration very early in the design requirement step. 

In this work, we aim to construct novel approaches for rapidly computing the DOS for a SAT problem (Boolean formula)~\cite{biere2009handbook}. Our approximate approach to estimate DOS of SAT instances exploits the concentration of measure inequalities~\cite{boucheron2013concentration}. These inequalities provide bounds on the tails of the distributions of random functions and have been used to construct the theory of generalization in machine learning~\cite{abu2012learning}, compute optimal bounds on uncertainty~\cite{owhadi2013optimal}, certify systems~\cite{leyendecker2010optimal}, compute bias of statistical estimators~\cite{gourgoulias2017biased}, 
and derive results in random matrix theory~\cite{tao2012topics}. 

In this paper, we make the following contributions:
\begin{enumerate}
\item We introduce a novel approach to estimate the density of states for SAT problems by using concentration of measures (McDiarmid's inequality), and  bound the deviation of the number of unsatisfied clauses (energy) from the expected (mean) number of unsatisfied clauses for uniformly distributed assignments. 
\item The deviation of the energy function from its expected value depends on its diameter (function variability), which can be computed by solving an optimization (maximization) problem~\cite{owhadi2013optimal}. We show this maximum deviation computation can be posed in the form of a quadratic unconstrained binary optimization (QUBO) that is particularly amenable to quantum annealers and results in tight bounds on the DOS histogram.
\item We demonstrate our approach on classical platforms by computing the diameter and associated concentration of measure bounds using  Selby's implementation~\cite{selby2013qubo,selby2014efficient} of the Hamze-de Freitas- Selby (HFS) algorithm~\cite{hamze2004fields}. 
\item We use satisfiability modulo theory (SMT) solvers~\cite{bjorner2015nuz,sebastiani2015optimathsat,dutertre2014yices} to solve the QUBO formulation as an alternative approach on classical platforms. The solutions from SMT solvers provide  tighter estimates but require significantly higher computational effort and do not scale.
\item We then compare the classical results to the computations on the D-Wave quantum annealer, a commercially available noisy intermediate-scale quantum (NISQ) device~\cite{preskill2018quantum}. We find the D-Wave machine provides higher-quality solutions when compared to the HFS algorithm, and scales better than SMT solvers. We further note the search for useful problems that are appropriate for present day NISQ devices is a very active area of research within quantum computing~\cite{preskill2018quantum}. We propose the DOS computation task as a potential test problem that can be used to \emph{benchmark current- and next-generation quantum annealers against their classical counterparts}.
\end{enumerate}


The rest of the paper is organized as follows. In section~\ref{sec:ksat}, we introduce the SAT problem and its associated density of states. We then discuss the existing state-of-the-art methods for computing DOS and highlight their limitations. Section~\ref{sec:conc} introduces concentration of measure inequalities in the context of DOS computations for the SAT problem. Section~\ref{sec:qubo} presents our novel formulation of the DOS problem as a QUBO for estimating the associated energy histogram. In section~\ref{sec:results}, we present results comparing state-of-the-art algorithms for computing density of states with our proposed concentration of measures approach on classical and quantum platforms. For the concentration of measures approach, we further present comparisons between the performance of the D-Wave machine with the HFS algorithm and the Z3 SMT solver.  We conclude in section~\ref{conclusion} by summarizing our results and presenting key challenges.


\section{Background}\label{sec:ksat}
\label{background}
Consider a $k$-SAT formula $\phi(x)$ of $N$ binary variables and $m$ clauses, $\phi:\{0,1\}^{N}\rightarrow\{0,1\}$, written
in the conjunctive normal form (CNF)~\cite{biere2009handbook} as follows,
\begin{equation}
\label{eq:sat}
\phi(x) = \bigwedge_{i=1}^m C_i = \bigwedge_{i=1}^m (x_{i_1} \lor x_{i_2} \lor\hdots\lor x_{i_{k}}),
\end{equation}
where $x_{i_l}$ is the $l^{\rm th}$ literal in clause $C_{i}$. A SAT formula is said to be 
satisfiable if there exists an assignment for the binary variables $\x$ such that $\phi(\x)=1\,\,          (\text{true})$. It is well known that the satisfiability problem is NP-complete~\cite{cook1971complexity}. A critical parameter associated with the satisfiability problem is the clause density $\alpha = m/N$~\cite{biere2009handbook}. In particular, the probability that a random $k$-SAT instance is satisfiable undergoes a phase transition as a function of $\alpha$ ($N\rightarrow\infty$)~\cite{xu2000exact,biere2009handbook}.  The MAX-SAT problem (and the corresponding weighted version)~\cite{krentel1988complexity,chieu2009relaxed} requires one to find that assignment (or assignments) that maximize the number (or the cumulative weights) of satisfied clauses. Consider a SAT formula $\phi$, then every assignment $x$ can be mapped to an ``energy'' $\Phi(x)$ such that,
\begin{equation}
  \label{eq:1}
  \Phi(\x) = \sum_{i=1}^{m} C_{i},
\end{equation}
where $C_i=1$, if the $i$-th clause evaluates to $\text{true}$. In other words, the goal under the MAX-SAT problem is to find the assignment for $\x$
such that the number of satisfied clauses (or energy) is maximized. Using De Morgan's laws,
one can easily show that,
\begin{align}
  \label{eq:3}
  \Phi(\x) &= m-\displaystyle\sum_{i=1}^{m}\prod_{l=1}^{k} f(x_{i_l}),\\
  \text{where} f(x_{i_{l}})&=\begin{cases}
    x_{i_l},& \text{if } x_{i_l} \text{is negated in the clause}  \\
    (1-x_{i_1}),        & \text{otherwise}.
\end{cases}
\end{align}
Using the above formula, it is easy to see that if each literal $x_{i_{l}}$ were random with equal probability for values $\{0,1\}$, then the expected number of satisfied clauses is,
\begin{align}
\E[\Phi(\x)] &= \frac{m(2^{k} - 1)}{2^{k}}.
\end{align}
Thus, even though the satisfiability is NP-complete, a random assignment is expected to satisfy a large fraction of the clauses. Note that for the 3-SAT, the above formula reduces to,
\begin{align}
\E[\Phi(\x)] = 7m/8.
\end{align}
One can use this expected (mean) value of the number of satisfied clauses to estimate the DOS using concentration of measure inequalities (see section~\ref{sec:conc}). For SAT instances that arise from specific application domains (thus, not random), one can estimate the expected number of satisfied clauses by sampling over the independent variables in the Boolean formula.

The DOS $d(e)$ of a SAT formula $\phi$ is equal to the number of assignments $\x$ for which
$\Phi(\x)=e$. In other words, it is the histogram of the number assignments as a function of $e$ satisfied clauses. Note the value of the number of satisfied clauses $e$ lies between $0$ and $m$ where
$m$ is the total number of clauses in the SAT formula. Following the terminology from the
physics community, we will also call this the energy of SAT formula. Since the total number of possible assignments is $2^N$, one can define the normalized density of states as follows,
\begin{equation}
  \label{eq:5}
  p(e) = \frac{d(e)}{2^{N}}\ .
\end{equation}
The normalized DOS acts as a discrete probability distribution.  Note 
it is not necessary that all energies $e$ have a valid assignment. For example, if the SAT formula cannot
be satisfied then $p(m)=0$. As explained previously, for a random 3-SAT, the mean of $p(e)$ vs $e$ is at $7m/8$~\cite{ermon2010computing}.

The density computation problem generalizes
computationally hard problems of
MAX-SAT and model counting~\cite{birnbaum1999good}.
The state-of-the-art algorithm for computing DOS is inspired by the Wang and Landau random walk algorithm~\cite{wang2001efficient}. In~\cite{ermon2010computing,ermon2011flat}, the authors propose an adaptive Markov Chain Monte Carlo (MCMC) approach called MCMC-FlatSAT that aims to sample from a steady-state distribution such that probability of a particular assignment $\sigma$ is inversely proportional to its DOS. Thus, the sampling approach effectively converges to a  flat-visit histogram (captured by a flatness parameter). In~\cite{ermon2010computing,ermon2011flat}, the authors test the algorithm on multiple benchmark examples. We use this method to compute the DOS for a series of random SAT instances and compare the resulting histograms to estimated DOS using our concentration of measure-based approach (as outlined in section~\ref{sec:conc}). 

Fig.~\ref{fig:DOS} describes a simple example of the DOS problem for a Boolean satisfiability problem with  $N=100$ and $\alpha=4.0$, and shows an example output of the MCMC-FlatSAT algorithm. In our experiments, we found that for $k$-SAT instances close to the phase transition~\cite{monasson1999determining}, the mixing times of the Markov chain~\cite{levin2017markov} increase significantly.
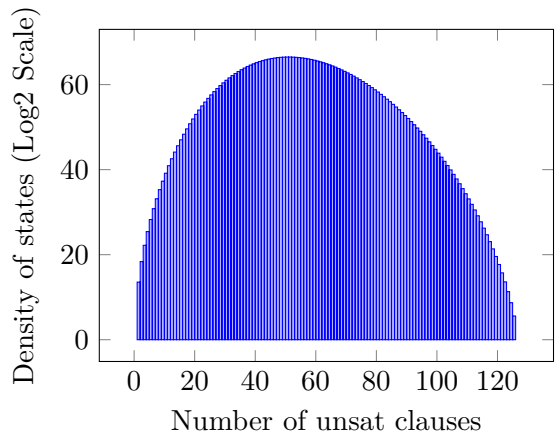
\begin{figure}[h]
\centering
\begin{tikzpicture}
        \begin{axis}[
            width=0.5\textwidth,
            height=6cm,
            area style,
            xtick = {0,20,40,60,80,100,120,140},
            ylabel = {Density of states (Log2 Scale)},
            xlabel = {Number of unsat clauses},
            ]
        \addplot+[ybar interval,mark=no] plot coordinates { 
  (	1	,	13.55	)
(	2	,	18.37	)
(	3	,	22.21	)
(	4	,	25.43	)
(	5	,	28.27	)
(	6	,	30.82	)
(	7	,	33.15	)
(	8	,	35.29	)
(	9	,	37.29	)
(	10	,	39.16	)
(	11	,	40.92	)
(	12	,	42.56	)
(	13	,	44.13	)
(	14	,	45.61	)
(	15	,	47.01	)
(	16	,	48.33	)
(	17	,	49.59	)
(	18	,	50.79	)
(	19	,	51.92	)
(	20	,	53.00	)
(	21	,	54.02	)
(	22	,	54.98	)
(	23	,	55.89	)
(	24	,	56.76	)
(	25	,	57.58	)
(	26	,	58.35	)
(	27	,	59.08	)
(	28	,	59.76	)
(	29	,	60.41	)
(	30	,	61.02	)
(	31	,	61.58	)
(	32	,	62.11	)
(	33	,	62.61	)
(	34	,	63.07	)
(	35	,	63.50	)
(	36	,	63.89	)
(	37	,	64.26	)
(	38	,	64.59	)
(	39	,	64.89	)
(	40	,	65.17	)
(	41	,	65.42	)
(	42	,	65.64	)
(	43	,	65.83	)
(	44	,	66.00	)
(	45	,	66.14	)
(	46	,	66.25	)
(	47	,	66.35	)
(	48	,	66.42	)
(	49	,	66.46	)
(	50	,	66.48	)
(	51	,	66.48	)
(	52	,	66.46	)
(	53	,	66.42	)
(	54	,	66.36	)
(	55	,	66.27	)
(	56	,	66.17	)
(	57	,	66.05	)
(	58	,	65.91	)
(	59	,	65.74	)
(	60	,	65.56	)
(	61	,	65.36	)
(	62	,	65.15	)
(	63	,	64.91	)
(	64	,	64.66	)
(	65	,	64.39	)
(	66	,	64.10	)
(	67	,	63.79	)
(	68	,	63.46	)
(	69	,	63.12	)
(	70	,	62.76	)
(	71	,	62.39	)
(	72	,	61.99	)
(	73	,	61.58	)
(	74	,	61.16	)
(	75	,	60.71	)
(	76	,	60.26	)
(	77	,	59.78	)
(	78	,	59.29	)
(	79	,	58.78	)
(	80	,	58.25	)
(	81	,	57.71	)
(	82	,	57.15	)
(	83	,	56.57	)
(	84	,	55.97	)
(	85	,	55.36	)
(	86	,	54.73	)
(	87	,	54.08	)
(	88	,	53.42	)
(	89	,	52.73	)
(	90	,	52.03	)
(	91	,	51.31	)
(	92	,	50.56	)
(	93	,	49.80	)
(	94	,	49.02	)
(	95	,	48.21	)
(	96	,	47.39	)
(	97	,	46.54	)
(	98	,	45.68	)
(	99	,	44.78	)
(	100	,	43.87	)
(	101	,	42.93	)
(	102	,	41.96	)
(	103	,	40.97	)
(	104	,	39.94	)
(	105	,	38.89	)
(	106	,	37.81	)
(	107	,	36.69	)
(	108	,	35.53	)
(	109	,	34.34	)
(	110	,	33.12	)
(	111	,	31.84	)
(	112	,	30.52	)
(	113	,	29.15	)
(	114	,	27.73	)
(	115	,	26.24	)
(	116	,	24.69	)
(	117	,	23.07	)
(	118	,	21.36	)
(	119	,	19.58	)
(	120	,	17.69	)
(	121	,	15.72	)
(	122	,	13.60	)
(	123	,	11.30	)
(	124	,	8.70	)
(	125	,	5.58	)
(	126	,	1.39	)
         };
        \end{axis}
 \end{tikzpicture}
\caption{
This example Boolean satisfiability problem has $N=100$ variables, and the clause density ratio $\alpha$ is $4.0$. 
The x-axis is the number of UNSAT clauses and the y-axis provides a numerical value for the number of assignments with the corresponding number of UNSAT clauses. The DOS over all possible variable assignments are captured in the histogram. 
}\label{fig:DOS}
\end{figure}

\section{SAT and Concentration of measure inequalities}\label{sec:conc}
In this section, we
describe our approach based on a concentration of measure inequality for estimating DOS of satisfiability problems, and its relationship to the quadratic unconstrained binary optimization (QUBO) problem.

The concentration of measure phenomena bounds the deviation of functions of random variables around their mean~\cite{boucheron2013concentration}. There are a host of inequalities associated with various situations, see~\cite{boucheron2013concentration,tao2010254a} for more details. For our setting, we use McDiarmid's inequality~\cite{mcdiarmid1989method} summarized in the theorem below.
\begin{theorem}
(McDiarmid's inequality) Let $x_{1}, x_{2},\hdots, x_{N}$ be random variables taking values in the range $R_{1}, R_{2},\hdots, R_{N}$ and let $F: R_{1}\times R_{2}\times\hdots\times R_{N}\rightarrow  \mathbb{R}$ be a function with the property that if one freezes all but the $i$-th variable, then $F(x_{1}, x_{2},\hdots, x_{N})$ fluctuates by at most $D_{i}>0$,
\begin{align}
D_i = \sup_{x_{1},x_{2},\hdots, x_i,\hat{x}_{i},\hdots,x_{N}} &\left|F(x_{1}, x_{2}, \hdots, x_{i},\hdots,x_{N}))\right. \nonumber\\
& \left.- F(x_{1}, x_{2}, \hdots,\hat{x}_{i},\hdots, x_{N}))\right|.
\label{eq:dia}
\end{align}
Then the probability that $F$ deviates from its expected value is given by,
\begin{align}
\mathbb{P}\left[\left|F(x_{1}, x_{2},\hdots, x_{N}) - \mathbb{E}[F(x_{1}, x_{2},\hdots, x_{N})]\right| \ge \epsilon \right] \nonumber\\
\le C\exp( - c\frac{2\epsilon^2}{D^2}),
\label{eq:mcdiarmid}
\end{align}
where $D = \sqrt{\sum_i D_i^2}$ is called the diameter and $C, c$ are constants.

\end{theorem}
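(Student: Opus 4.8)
The plan is to derive this from the Azuma--Hoeffding martingale inequality applied to the Doob martingale of $F$, under the (implicitly assumed, and satisfied in the application to uniform random assignments) hypothesis that the $x_i$ are \emph{independent}. First I would fix the filtration $\mathcal{F}_k = \sigma(x_1,\dots,x_k)$ and set $Y_k = \E[F(x_1,\dots,x_N)\mid \mathcal{F}_k]$ for $k=0,1,\dots,N$, so that $Y_0 = \E[F]$ is deterministic, $Y_N = F$, and the quantity of interest telescopes: $F - \E[F] = \sum_{k=1}^N (Y_k - Y_{k-1})$.

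The technical heart is the bounded-increment estimate: conditionally on $\mathcal{F}_{k-1}$, the increment $Y_k - Y_{k-1}$ has mean zero and takes values in an interval of width at most $D_k$. Using independence, the conditional law of $(x_{k+1},\dots,x_N)$ given $\mathcal{F}_k$ does not depend on $x_k$, so for any two candidate values $x_k,\hat x_k$ of the $k$-th coordinate,
\[
\E[F\mid \mathcal{F}_{k-1}, x_k] - \E[F\mid \mathcal{F}_{k-1}, \hat x_k]
= \E_{x_{k+1},\dots,x_N}\bigl[F(\dots,x_k,\dots) - F(\dots,\hat x_k,\dots)\bigr],
\]
and the right-hand side is bounded by $D_k$ by hypothesis~\eqref{eq:dia} (all coordinates except the $k$-th are frozen). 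Since $Y_{k-1}$ is the $x_k$-average of $Y_k$, it lies in the same interval, which gives the claim.

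I would then run the standard Chernoff argument. For $\lambda>0$, Markov's inequality gives $\mathbb{P}[F - \E[F] \ge \epsilon] \le e^{-\lambda\epsilon}\,\E\bigl[\prod_{k=1}^N e^{\lambda (Y_k - Y_{k-1})}\bigr]$. Conditioning from the inside out via the tower property and applying Hoeffding's lemma --- a zero-mean variable supported in an interval of width $L$ has moment generating function at most $e^{\lambda^2 L^2/8}$ --- to each factor in turn, the expectation collapses to $e^{\lambda^2 \sum_k D_k^2/8} = e^{\lambda^2 D^2/8}$. Choosing $\lambda = 4\epsilon/D^2$ yields $\mathbb{P}[F - \E[F] \ge \epsilon] \le e^{-2\epsilon^2/D^2}$; applying the same bound to $-F$ and taking a union bound over the two tails gives \eqref{eq:mcdiarmid} with $C=2$ and $c=1$.

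The only nonroutine step is the bounded-increment claim of the second paragraph; everything after it (Hoeffding's lemma plus iterated conditioning, i.e.\ Azuma--Hoeffding) is mechanical. The subtlety to keep an eye on is that this step genuinely needs independence of the coordinates --- without it the conditional distribution of the remaining variables would depend on $x_k$ and the coupling would fail --- but this is harmless in our setting, since the DOS application takes the $x_i$ to be independent uniform bits, for which integrability and well-definedness of all conditional expectations are automatic.
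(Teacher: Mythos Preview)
Your argument is correct and is precisely the standard proof of McDiarmid's inequality: form the Doob martingale, use independence to bound each increment by $D_k$, then apply Hoeffding's lemma inside a Chernoff bound and optimize in $\lambda$. The constants $C=2$, $c=1$ you obtain are the usual ones, and your remark that independence is essential (and harmless here because the application uses i.i.d.\ uniform bits) is well placed.

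The paper, however, does not give a proof at all: its entire proof environment reads ``See~\cite{boucheron2013concentration,tao2010254a}.'' So there is no approach to compare against; you have supplied exactly the argument that those references contain. If anything, you have done more than the paper asks, and what you wrote would serve perfectly well as a self-contained replacement for the citation.
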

\begin{proof}
See~\cite{boucheron2013concentration,tao2010254a}.
\end{proof}

So instead of computing the DOS using the MCMC approach outlined in section~\ref{sec:ksat}, one can exploit McDiarmid's inequality to compute bounds on the histogram of number of satisfied (or unsatisfied) clauses. In the setting of the $k$-SAT problem, the $x_{1}, x_{2},\hdots, x_{N}$ in the McDiarmid's inequality are replaced by the variables $\x$ present in the logical formula.
That is,
$x_{1}, x_{2},\hdots, x_{N}$ are the unique set of Boolean variables that occur in the formula. The MCMC computation is now replaced by the set of optimization problems for computing the diameter as shown in Eqn.~\ref{eq:dia}.

For ease of presentation, we focus on the $3$-SAT problem instead of the generic $k$-SAT formulation. Polynomial time reductions from $k$-SAT to $3$-SAT make this translation nonrestrictive. Every $k$-SAT instance can be converted to a $3$-SAT instance by introducing additional (ancillary) variables. We now show that the diameter computations for the $3$-SAT problem give rise to a QUBO~\cite{boros2007local,rieffel2011quantum} problem that is particularly amenable to quantum annealers~\cite{kochenberger2014unconstrained}. 

\section{QUBO formulation for diameter computations} \label{sec:qubo}
To estimate $D_i$ in Eqn.~\ref{eq:dia} for the $3$-SAT setting, consider the following form for the SAT formula,
\[
\phi(\x) = \bigwedge_{i=1}^m C_i = \bigwedge_{i=1}^m (x_{i_1} \vee x_{i_2} \vee x_{i_3}),
\]
where $x_{i_l}$ is the $l$-th literal in clause $C_i$. It is easy to check that the number of satisfied clauses can be expressed as:
\begin{align}
\Phi(\x) = \sum_{i=1}^m C_i = \sum_{i=1}^m (&x_{i_1}  + x_{i_2} + x_{i_3} - x_{i_1}x_{i_2} \nonumber \\
& - x_{i_1}x_{i_3} - x_{i_2}x_{i_3} + x_{i_1}x_{i_2}x_{i_3}).
\end{align}
While this expression has a cubic term, the cubic term disappears when computing the diameter in Eqn.~\eqref{eq:dia} (shown later). We now state our central result that formulates the estimation of diameters $D_i$ needed for computing the DOS as a quadratic unconstrained Boolean optimization problem.

\begin{theorem}
The diameter $D_i$ for the variable $x_i$ in McDiarmid's inequality can be computed by solving the following optimization problem,
\begin{align*}
D_i = \max_{\x \setminus \x(i)} \left|\sum_{p\in S_i^+} \left[ 1 - x_{p_2} - x_{p_3} + x_{p_2} x_{p_3} \right]\right.\nonumber \\
\left. -\sum_{p\in S_i^-}  \left[ 1 - x_{p_2} - x_{p_3} + x_{p_2} x_{p_3} \right]\right|,
\end{align*}
where $S_i^+$ and $S_i^-$ are the sets of clauses in which $x(i)$ appears in direct and negated forms, respectively.
\end{theorem}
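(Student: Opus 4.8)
The plan is to reduce the supremum defining $D_i$ in \eqref{eq:dia} to a maximization over the Boolean assignment of the variables other than $x_i$, by tracking exactly how each clause depends on $x_i$. Take $F=\Phi$, the number of satisfied clauses; replacing $\Phi$ by the number of \emph{unsatisfied} clauses $m-\Phi$ leaves every $D_i$ unchanged, since the additive constant $m$ cancels in the difference \eqref{eq:dia}. Because each variable ranges over $\{0,1\}$, the supremum in \eqref{eq:dia} for fixed values of the other variables is attained by taking $x_i$ and $\hat x_i$ to be the two distinct values $0$ and $1$ (the case $x_i=\hat x_i$ gives difference $0$, which cannot exceed the supremum, a nonnegative quantity). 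Hence
\[
D_i=\max_{\x\setminus\x(i)}\bigl|\,\Phi|_{x_i=1}-\Phi|_{x_i=0}\,\bigr|,
\]
and the task becomes identifying the bracketed difference.

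First I would partition the clauses of $\phi$ into three groups: $S_i^+$, the clauses containing the literal $x_i$; $S_i^-$, the clauses containing the literal $\bar x_i$; and the remaining clauses, which mention $x_i$ in neither form. For a clause $C_p$ in the third group, viewing $C_p=1-\prod_l f(x_{p_l})$ as in \eqref{eq:3}, the product contains no factor involving $x_i$, so $C_p$ contributes identically to $\Phi|_{x_i=1}$ and to $\Phi|_{x_i=0}$ and drops out of the difference. This is exactly where the restriction of the sum to $S_i^+\cup S_i^-$ originates.

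Next I would isolate the $x_i$-dependence of the surviving clauses. Writing a clause $C_p\in S_i^+$ as $x_i\vee \ell_{p_2}\vee \ell_{p_3}$ with $x_{p_2},x_{p_3}\in\{0,1\}$ the $\{0,1\}$-values of its other two literals, we get $C_p=1-(1-x_i)(1-x_{p_2})(1-x_{p_3})$, which is \emph{affine} in $x_i$ with slope $(1-x_{p_2})(1-x_{p_3})=1-x_{p_2}-x_{p_3}+x_{p_2}x_{p_3}$; thus $C_p|_{x_i=1}-C_p|_{x_i=0}=1-x_{p_2}-x_{p_3}+x_{p_2}x_{p_3}$. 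Symmetrically, for $C_p\in S_i^-$ one has $C_p=1-x_i(1-x_{p_2})(1-x_{p_3})$ and $C_p|_{x_i=1}-C_p|_{x_i=0}=-(1-x_{p_2}-x_{p_3}+x_{p_2}x_{p_3})$. This is precisely the point at which the cubic monomial $x_{p_1}x_{p_2}x_{p_3}$ in the expansion of $\Phi$ disappears, as anticipated in the text: $x_i$ enters each relevant clause only through the linear factor $x_i$ or $1-x_i$, so the per-clause difference is at most quadratic in the remaining variables. Summing these contributions over $S_i^+$ and $S_i^-$ and inserting the result into the displayed expression for $D_i$ yields the stated formula.

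The main obstacle is careful bookkeeping rather than mathematical depth: one must keep straight that $x_{p_2},x_{p_3}$ in the final expression denote the \emph{literal} values (so $x_{p_2}=1-v$ when the literal is $\bar v$), and one must dispose of degenerate clauses — a variable repeated within a clause, or a clause containing both $x_i$ and $\bar x_i$. In the latter case $C_p\equiv 1$, so it behaves like a third-group clause; in the former, the affine-in-$x_i$ property and the factorization still go through after collecting like literals. Finally one should verify that replacing the supremum over $(x_i,\hat x_i)$ in \eqref{eq:dia} by a maximum of an absolute value over Boolean $\x\setminus\x(i)$ is legitimate, which is immediate from the two-valuedness of $x_i$ noted at the outset.
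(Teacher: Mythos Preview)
Your proof is correct and follows essentially the same approach as the paper: partition the clauses into $S_i^+$, $S_i^-$, and those not mentioning $\x(i)$, drop the latter from the difference, and compute the per-clause contribution $\Phi_p|_{x_i=1}-\Phi_p|_{x_i=0}$ to obtain the quadratic expression. Your use of the factored form $C_p=1-(1-x_i)(1-x_{p_2})(1-x_{p_3})$ and the observation that each clause is affine in $x_i$ is a slightly slicker packaging of the same computation the paper carries out by explicit expansion, and your remarks on literal-versus-variable bookkeeping and degenerate clauses are careful additions the paper leaves implicit.
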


\begin{proof}
To compute the diameter $D_i$, pick the $i$-th variable of $\x$ denoted as $\x(i)$ and compute the worst-case variation in the number of satisfied clauses. Since $\Phi(\x)$ is a sum over different clauses, only those clauses that include $\x(i)$ in their literal set will contribute to the diameter. $S_i^+$ and $S_i^-$ are the sets of clauses in which $x(i)$ appears in direct and negated forms respectively, that is,
\begin{align*}
S_i^+ &= \{p: C_p = \x(i) \vee x_{p_2} \vee x_{p_3} \}, \\
S_i^- &= \{p: C_p = \neg \x(i) \vee x_{p_2} \vee x_{p_3} \},
\end{align*}
are the set of clauses in which the variable $\x(i)$ appears in the corresponding literal sets as either $\x(i)$ or $\neg \x(i)$, respectively. 
Furthermore, because $\vee$ is commutative, we can assume without loss of generality that the variable $\x(i)$ appears as the first literal of the clause.  Therefore, the expression for the number of satisfied clauses in the $3$-SAT instance is: 
\begin{align}
\forall p \in S_i^+  \quad \Phi_p = &\x(i) + x_{p_2} + x_{p_3} \nonumber \\
 &- \x(i) x_{p_2} - \x(i) x_{p_3} - x_{p_2}x_{p_3} + \x(i) x_{p_2}x_{p_3}, \nonumber \\
\forall p \in S_i^-  \quad \Phi_p = &(1-\x(i)) + x_{p_2} + x_{p_3} \nonumber \\
&- (1-\x(i)) x_{p_2} - (1-\x(i)) x_{p_3} \nonumber \\
&- x_{p_2}x_{p_3} + (1-\x(i)) x_{p_2}x_{p_3}.
\label{eqn:raw}
\end{align}
Let $S_i^0 = \{1,\ldots,m\} \setminus (S_i^+ \cup S_i^-)$ be the set of clauses within which $\x(i)$ does not occur, thus,
\[
\Phi(x) = \sum_{p\in S_i^0} \Phi_p + \sum_{p\in S_i^+} \Phi_p + \sum_{p\in S_i^-} \Phi_p.
\]
The first term in the above sum is not affected by changing $\x(i)$ and does not contribute to the diameter and cancels in the subtraction in Eqn.~\ref{eq:dia}. Now, since $\x(i)$ can only take one of two values, $\{0,1\}$,
the number of clauses satisfied by setting $\x(i)$ to $1$ in $S_i^+$ is $ \displaystyle\sum_{p \in S_i^+} \left[1\right] $, and 
in $S_i^-$ is $ \displaystyle\sum_{p \in S_i^-} x_{p_2} + x_{p_3} - x_{p_2} x_{p_3} $ (computed using Eqn.~\ref{eqn:raw}). 
Symmetrically, the number of clauses satisfied by setting $\x(i)$ to $0$ in $S_i^+$ is $ \displaystyle\sum_{p \in S_i^-}\left[1\right] $, and  
in $S_i^+$ is $ \displaystyle\sum_{p \in S_i^+} x_{p_2} + x_{p_3} - x_{p_2} x_{p_3} $. $D_i$ is the maximum deviation between the two, that is,
\begin{align*}
D_i = \max_{\x \setminus \x(i)} \left|\sum_{p\in S_i^+} \left[ 1  \right] +  \sum_{p\in S_i^-}  \left[  x_{p_2} + x_{p_3} - x_{p_2} x_{p_3} \right] \right.\nonumber \\
\left. - \sum_{p\in S_i^+} \left[ x_{p_2} + x_{p_3} - x_{p_2} x_{p_3} \right] - \sum_{p\in S_i^-}  \left[ 1  \right]\right|.
\end{align*}
We get the following optimization problem to compute $D_i$ by collecting the terms for summing over $S_i^+$ and $S_i^-$,
\begin{align}
D_i = \max_{\x \setminus \x(i)} \left|\sum_{p\in S_i^+} \left[ 1 - x_{p_2} - x_{p_3} + x_{p_2} x_{p_3} \right]\right.\nonumber \\
\left. -\sum_{p\in S_i^-}  \left[ 1 - x_{p_2} - x_{p_3} + x_{p_2} x_{p_3} \right]\right|.
\label{diameter_quadratic}
\end{align}
\end{proof}

This result makes sense intuitively, because the expression inside each bracket is logically equivalent to $\neg (x_{p_2} \wedge x_{p_3})$, and if either of the other literals is true, the disjunctive clause $C_p$ remains false regardless of~$\x(i)$, and therefore does not contribute to the diameter.

The expression inside the absolute value in~\eqref{diameter_quadratic} is a quadratic form in $\x\setminus \x(i)$. Note that Eqn.~\ref{diameter_quadratic} can easily be cast into a purely quadratic form $\x^TQ\x$ as the linear terms can be absorbed into the diagonal of the matrix because $x_p = x_p^2$ for binary variables.
\begin{remark}
\label{rem1}
The computation for $D_i$ involves the maximization of an absolute value. To address the absolute value, we simply perform two separate maximizations as follows $\sup_\x | f(\x) | = \max \{\sup_\x f(\x), -\sup_\x f(\x)\}$. Thus, we compute the two maximizations and choose the larger result to obtain the diameter. Note that for a $3$-SAT instance with $N$ unique variables, one needs to perform $2N$ optimizations.
\end{remark}
\begin{remark}
Besides providing a novel approach for estimating the DOS of $k$-SAT problems, the diameter computation can be used to benchmark optimization algorithms and computing platforms. In particular, by comparing the value of the computed diameter by different approaches, one can quantify their performance. Higher diameter values correspond to ``better'' solutions of the optimization problem.
\end{remark}
\begin{remark}
Using the density of states, one can extract the probability of a random assignment being
\emph{at least} $\epsilon$ (in terms of energy or number of clauses satisfied) away from the average value $\bar{E}=\sum_{e=0}^{m}p(e)e$. Thus,
\begin{equation}
  \label{eq:rem2}
  P\left[\, \left|E - \bar{E}\right| \geq \epsilon\, \right] = \sum_{\bar{E}+\epsilon}^{m}p(e)\ .
\end{equation}
This quantity can be computed numerically from $d(e)$.
\end{remark}
In our experiments, we solved this QUBO formulation using quantum and classical computing methods. We describe these in detail in the next section.

\section{Results}\label{sec:results}
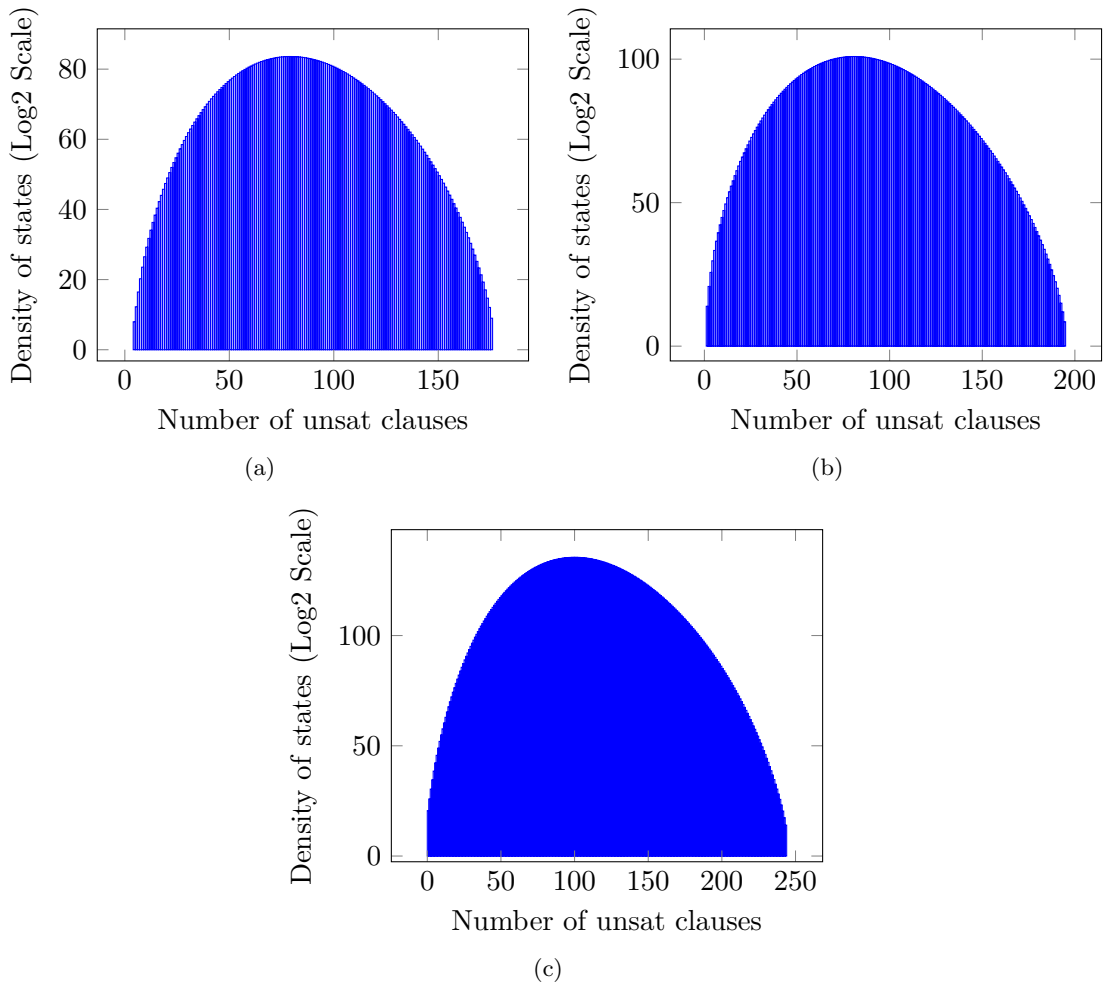
\begin{figure*}[!ht]
  \centering
  \subfigure[]{
  \begin{tikzpicture}
        \begin{axis}[
            width=0.48\textwidth,
            height=6cm,
            area style,
            xtick = {0,50,100,150,200},
            ylabel = {Density of states (Log2 Scale)},
            xlabel = {Number of unsat clauses},
            ]
        \addplot+[ybar interval,mark=no] plot coordinates { 
(	4	,	7.98	)
(	5	,	12.25	)
(	6	,	16.46	)
(	7	,	20.23	)
(	8	,	23.54	)
(	9	,	26.51	)
(	10	,	29.23	)
(	11	,	31.74	)
(	12	,	34.09	)
(	13	,	36.30	)
(	14	,	38.38	)
(	15	,	40.35	)
(	16	,	42.23	)
(	17	,	44.02	)
(	18	,	45.74	)
(	19	,	47.40	)
(	20	,	48.98	)
(	21	,	50.50	)
(	22	,	51.97	)
(	23	,	53.38	)
(	24	,	54.73	)
(	25	,	56.03	)
(	26	,	57.29	)
(	27	,	58.50	)
(	28	,	59.67	)
(	29	,	60.79	)
(	30	,	61.88	)
(	31	,	62.92	)
(	32	,	63.93	)
(	33	,	64.90	)
(	34	,	65.84	)
(	35	,	66.74	)
(	36	,	67.61	)
(	37	,	68.46	)
(	38	,	69.27	)
(	39	,	70.05	)
(	40	,	70.81	)
(	41	,	71.53	)
(	42	,	72.23	)
(	43	,	72.90	)
(	44	,	73.55	)
(	45	,	74.18	)
(	46	,	74.77	)
(	47	,	75.35	)
(	48	,	75.90	)
(	49	,	76.43	)
(	50	,	76.94	)
(	51	,	77.43	)
(	52	,	77.89	)
(	53	,	78.34	)
(	54	,	78.76	)
(	55	,	79.16	)
(	56	,	79.55	)
(	57	,	79.91	)
(	58	,	80.26	)
(	59	,	80.59	)
(	60	,	80.89	)
(	61	,	81.18	)
(	62	,	81.46	)
(	63	,	81.71	)
(	64	,	81.95	)
(	65	,	82.17	)
(	66	,	82.38	)
(	67	,	82.57	)
(	68	,	82.74	)
(	69	,	82.89	)
(	70	,	83.03	)
(	71	,	83.16	)
(	72	,	83.26	)
(	73	,	83.36	)
(	74	,	83.44	)
(	75	,	83.50	)
(	76	,	83.55	)
(	77	,	83.58	)
(	78	,	83.60	)
(	79	,	83.60	)
(	80	,	83.59	)
(	81	,	83.57	)
(	82	,	83.53	)
(	83	,	83.48	)
(	84	,	83.41	)
(	85	,	83.33	)
(	86	,	83.24	)
(	87	,	83.13	)
(	88	,	83.01	)
(	89	,	82.88	)
(	90	,	82.73	)
(	91	,	82.57	)
(	92	,	82.40	)
(	93	,	82.21	)
(	94	,	82.01	)
(	95	,	81.80	)
(	96	,	81.58	)
(	97	,	81.34	)
(	98	,	81.09	)
(	99	,	80.83	)
(	100	,	80.55	)
(	101	,	80.26	)
(	102	,	79.96	)
(	103	,	79.64	)
(	104	,	79.32	)
(	105	,	78.98	)
(	106	,	78.63	)
(	107	,	78.26	)
(	108	,	77.88	)
(	109	,	77.49	)
(	110	,	77.09	)
(	111	,	76.68	)
(	112	,	76.26	)
(	113	,	75.81	)
(	114	,	75.36	)
(	115	,	74.90	)
(	116	,	74.42	)
(	117	,	73.93	)
(	118	,	73.43	)
(	119	,	72.91	)
(	120	,	72.38	)
(	121	,	71.84	)
(	122	,	71.28	)
(	123	,	70.72	)
(	124	,	70.13	)
(	125	,	69.54	)
(	126	,	68.93	)
(	127	,	68.31	)
(	128	,	67.67	)
(	129	,	67.03	)
(	130	,	66.36	)
(	131	,	65.68	)
(	132	,	64.99	)
(	133	,	64.28	)
(	134	,	63.56	)
(	135	,	62.82	)
(	136	,	62.06	)
(	137	,	61.29	)
(	138	,	60.51	)
(	139	,	59.71	)
(	140	,	58.88	)
(	141	,	58.05	)
(	142	,	57.19	)
(	143	,	56.31	)
(	144	,	55.42	)
(	145	,	54.51	)
(	146	,	53.58	)
(	147	,	52.62	)
(	148	,	51.64	)
(	149	,	50.64	)
(	150	,	49.62	)
(	151	,	48.56	)
(	152	,	47.49	)
(	153	,	46.38	)
(	154	,	45.24	)
(	155	,	44.08	)
(	156	,	42.88	)
(	157	,	41.65	)
(	158	,	40.39	)
(	159	,	39.09	)
(	160	,	37.74	)
(	161	,	36.35	)
(	162	,	34.93	)
(	163	,	33.45	)
(	164	,	31.92	)
(	165	,	30.34	)
(	166	,	28.70	)
(	167	,	26.98	)
(	168	,	25.19	)
(	169	,	23.32	)
(	170	,	21.36	)
(	171	,	19.29	)
(	172	,	17.07	)
(	173	,	14.70	)
(	174	,	12.06	)
(	175	,	8.99	)
(	176	,	4.70	)
         };
        \end{axis}
 \end{tikzpicture}
  }
  \subfigure[]{
  \begin{tikzpicture}
        \begin{axis}[
            width=0.48\textwidth,
            height=6cm,
            area style,
            xtick = {0,50,100,150,200},
            ylabel = {Density of states (Log2 Scale)},
            xlabel = {Number of unsat clauses},
            ]
        \addplot+[ybar interval,mark=no] plot coordinates { 
(	1	,	13.89	)
(	2	,	20.80	)
(	3	,	25.69	)
(	4	,	29.74	)
(	5	,	33.34	)
(	6	,	36.57	)
(	7	,	39.53	)
(	8	,	42.27	)
(	9	,	44.84	)
(	10	,	47.26	)
(	11	,	49.55	)
(	12	,	51.72	)
(	13	,	53.78	)
(	14	,	55.74	)
(	15	,	57.61	)
(	16	,	59.41	)
(	17	,	61.13	)
(	18	,	62.78	)
(	19	,	64.37	)
(	20	,	65.89	)
(	21	,	67.36	)
(	22	,	68.78	)
(	23	,	70.15	)
(	24	,	71.46	)
(	25	,	72.74	)
(	26	,	73.96	)
(	27	,	75.14	)
(	28	,	76.29	)
(	29	,	77.39	)
(	30	,	78.46	)
(	31	,	79.49	)
(	32	,	80.49	)
(	33	,	81.46	)
(	34	,	82.39	)
(	35	,	83.30	)
(	36	,	84.17	)
(	37	,	85.01	)
(	38	,	85.83	)
(	39	,	86.62	)
(	40	,	87.38	)
(	41	,	88.11	)
(	42	,	88.82	)
(	43	,	89.50	)
(	44	,	90.16	)
(	45	,	90.80	)
(	46	,	91.41	)
(	47	,	92.00	)
(	48	,	92.57	)
(	49	,	93.11	)
(	50	,	93.64	)
(	51	,	94.14	)
(	52	,	94.62	)
(	53	,	95.09	)
(	54	,	95.53	)
(	55	,	95.95	)
(	56	,	96.36	)
(	57	,	96.74	)
(	58	,	97.10	)
(	59	,	97.45	)
(	60	,	97.78	)
(	61	,	98.09	)
(	62	,	98.38	)
(	63	,	98.66	)
(	64	,	98.92	)
(	65	,	99.16	)
(	66	,	99.39	)
(	67	,	99.60	)
(	68	,	99.79	)
(	69	,	99.97	)
(	70	,	100.13	)
(	71	,	100.28	)
(	72	,	100.41	)
(	73	,	100.53	)
(	74	,	100.62	)
(	75	,	100.71	)
(	76	,	100.78	)
(	77	,	100.84	)
(	78	,	100.88	)
(	79	,	100.91	)
(	80	,	100.92	)
(	81	,	100.92	)
(	82	,	100.90	)
(	83	,	100.87	)
(	84	,	100.83	)
(	85	,	100.78	)
(	86	,	100.70	)
(	87	,	100.62	)
(	88	,	100.53	)
(	89	,	100.42	)
(	90	,	100.30	)
(	91	,	100.16	)
(	92	,	100.01	)
(	93	,	99.85	)
(	94	,	99.68	)
(	95	,	99.49	)
(	96	,	99.30	)
(	97	,	99.09	)
(	98	,	98.87	)
(	99	,	98.63	)
(	100	,	98.39	)
(	101	,	98.13	)
(	102	,	97.86	)
(	103	,	97.58	)
(	104	,	97.29	)
(	105	,	96.98	)
(	106	,	96.66	)
(	107	,	96.33	)
(	108	,	95.99	)
(	109	,	95.64	)
(	110	,	95.28	)
(	111	,	94.90	)
(	112	,	94.52	)
(	113	,	94.12	)
(	114	,	93.72	)
(	115	,	93.30	)
(	116	,	92.87	)
(	117	,	92.43	)
(	118	,	91.98	)
(	119	,	91.51	)
(	120	,	91.04	)
(	121	,	90.55	)
(	122	,	90.06	)
(	123	,	89.55	)
(	124	,	89.03	)
(	125	,	88.50	)
(	126	,	87.96	)
(	127	,	87.41	)
(	128	,	86.85	)
(	129	,	86.28	)
(	130	,	85.69	)
(	131	,	85.09	)
(	132	,	84.49	)
(	133	,	83.86	)
(	134	,	83.24	)
(	135	,	82.59	)
(	136	,	81.94	)
(	137	,	81.27	)
(	138	,	80.60	)
(	139	,	79.91	)
(	140	,	79.21	)
(	141	,	78.49	)
(	142	,	77.76	)
(	143	,	77.03	)
(	144	,	76.28	)
(	145	,	75.51	)
(	146	,	74.74	)
(	147	,	73.95	)
(	148	,	73.15	)
(	149	,	72.33	)
(	150	,	71.50	)
(	151	,	70.66	)
(	152	,	69.80	)
(	153	,	68.93	)
(	154	,	68.05	)
(	155	,	67.15	)
(	156	,	66.23	)
(	157	,	65.30	)
(	158	,	64.36	)
(	159	,	63.39	)
(	160	,	62.41	)
(	161	,	61.42	)
(	162	,	60.41	)
(	163	,	59.38	)
(	164	,	58.34	)
(	165	,	57.27	)
(	166	,	56.19	)
(	167	,	55.08	)
(	168	,	53.96	)
(	169	,	52.81	)
(	170	,	51.64	)
(	171	,	50.45	)
(	172	,	49.23	)
(	173	,	48.00	)
(	174	,	46.73	)
(	175	,	45.43	)
(	176	,	44.10	)
(	177	,	42.75	)
(	178	,	41.35	)
(	179	,	39.92	)
(	180	,	38.45	)
(	181	,	36.93	)
(	182	,	35.37	)
(	183	,	33.74	)
(	184	,	32.07	)
(	185	,	30.32	)
(	186	,	28.49	)
(	187	,	26.59	)
(	188	,	24.56	)
(	189	,	22.44	)
(	190	,	20.17	)
(	191	,	17.69	)
(	192	,	14.97	)
(	193	,	11.95	)
(	194	,	8.50	)
(	195	,	4.49	)
            };
            \end{axis}
 \end{tikzpicture}
  }
  \subfigure[]{
    \begin{tikzpicture}
        \begin{axis}[
            width=0.48\textwidth,
            height=6cm,
            area style,
            xtick = {0,50,100,150,200,250},
            ylabel = {Density of states (Log2 Scale)},
            xlabel = {Number of unsat clauses},
            ]
        \addplot+[ybar interval,mark=no] plot coordinates { 
(	0	,	20.59	)
(	1	,	25.88	)
(	2	,	30.35	)
(	3	,	34.55	)
(	4	,	38.52	)
(	5	,	42.22	)
(	6	,	45.68	)
(	7	,	48.93	)
(	8	,	52.00	)
(	9	,	54.91	)
(	10	,	57.69	)
(	11	,	60.36	)
(	12	,	62.91	)
(	13	,	65.35	)
(	14	,	67.70	)
(	15	,	69.98	)
(	16	,	72.17	)
(	17	,	74.29	)
(	18	,	76.32	)
(	19	,	78.30	)
(	20	,	80.22	)
(	21	,	82.07	)
(	22	,	83.88	)
(	23	,	85.62	)
(	24	,	87.32	)
(	25	,	88.97	)
(	26	,	90.57	)
(	27	,	92.11	)
(	28	,	93.63	)
(	29	,	95.09	)
(	30	,	96.52	)
(	31	,	97.90	)
(	32	,	99.25	)
(	33	,	100.56	)
(	34	,	101.83	)
(	35	,	103.07	)
(	36	,	104.28	)
(	37	,	105.45	)
(	38	,	106.59	)
(	39	,	107.69	)
(	40	,	108.77	)
(	41	,	109.82	)
(	42	,	110.84	)
(	43	,	111.83	)
(	44	,	112.79	)
(	45	,	113.73	)
(	46	,	114.64	)
(	47	,	115.52	)
(	48	,	116.38	)
(	49	,	117.21	)
(	50	,	118.03	)
(	51	,	118.81	)
(	52	,	119.57	)
(	53	,	120.31	)
(	54	,	121.03	)
(	55	,	121.73	)
(	56	,	122.41	)
(	57	,	123.06	)
(	58	,	123.70	)
(	59	,	124.31	)
(	60	,	124.90	)
(	61	,	125.48	)
(	62	,	126.03	)
(	63	,	126.57	)
(	64	,	127.09	)
(	65	,	127.59	)
(	66	,	128.07	)
(	67	,	128.54	)
(	68	,	128.99	)
(	69	,	129.42	)
(	70	,	129.83	)
(	71	,	130.23	)
(	72	,	130.61	)
(	73	,	130.98	)
(	74	,	131.32	)
(	75	,	131.66	)
(	76	,	131.98	)
(	77	,	132.28	)
(	78	,	132.57	)
(	79	,	132.84	)
(	80	,	133.10	)
(	81	,	133.35	)
(	82	,	133.58	)
(	83	,	133.80	)
(	84	,	134.00	)
(	85	,	134.19	)
(	86	,	134.36	)
(	87	,	134.52	)
(	88	,	134.67	)
(	89	,	134.81	)
(	90	,	134.93	)
(	91	,	135.04	)
(	92	,	135.14	)
(	93	,	135.22	)
(	94	,	135.30	)
(	95	,	135.36	)
(	96	,	135.41	)
(	97	,	135.44	)
(	98	,	135.47	)
(	99	,	135.48	)
(	100	,	135.48	)
(	101	,	135.47	)
(	102	,	135.45	)
(	103	,	135.42	)
(	104	,	135.37	)
(	105	,	135.32	)
(	106	,	135.25	)
(	107	,	135.17	)
(	108	,	135.08	)
(	109	,	134.99	)
(	110	,	134.88	)
(	111	,	134.76	)
(	112	,	134.62	)
(	113	,	134.48	)
(	114	,	134.33	)
(	115	,	134.17	)
(	116	,	133.99	)
(	117	,	133.81	)
(	118	,	133.62	)
(	119	,	133.41	)
(	120	,	133.21	)
(	121	,	132.98	)
(	122	,	132.75	)
(	123	,	132.51	)
(	124	,	132.25	)
(	125	,	131.99	)
(	126	,	131.72	)
(	127	,	131.44	)
(	128	,	131.15	)
(	129	,	130.85	)
(	130	,	130.54	)
(	131	,	130.22	)
(	132	,	129.89	)
(	133	,	129.55	)
(	134	,	129.21	)
(	135	,	128.85	)
(	136	,	128.49	)
(	137	,	128.11	)
(	138	,	127.73	)
(	139	,	127.34	)
(	140	,	126.93	)
(	141	,	126.52	)
(	142	,	126.10	)
(	143	,	125.67	)
(	144	,	125.23	)
(	145	,	124.79	)
(	146	,	124.33	)
(	147	,	123.86	)
(	148	,	123.39	)
(	149	,	122.90	)
(	150	,	122.41	)
(	151	,	121.91	)
(	152	,	121.39	)
(	153	,	120.87	)
(	154	,	120.34	)
(	155	,	119.80	)
(	156	,	119.25	)
(	157	,	118.69	)
(	158	,	118.13	)
(	159	,	117.55	)
(	160	,	116.97	)
(	161	,	116.37	)
(	162	,	115.76	)
(	163	,	115.15	)
(	164	,	114.53	)
(	165	,	113.90	)
(	166	,	113.26	)
(	167	,	112.60	)
(	168	,	111.95	)
(	169	,	111.27	)
(	170	,	110.59	)
(	171	,	109.90	)
(	172	,	109.20	)
(	173	,	108.49	)
(	174	,	107.77	)
(	175	,	107.04	)
(	176	,	106.30	)
(	177	,	105.55	)
(	178	,	104.79	)
(	179	,	104.01	)
(	180	,	103.23	)
(	181	,	102.44	)
(	182	,	101.63	)
(	183	,	100.81	)
(	184	,	99.99	)
(	185	,	99.15	)
(	186	,	98.30	)
(	187	,	97.44	)
(	188	,	96.57	)
(	189	,	95.68	)
(	190	,	94.78	)
(	191	,	93.87	)
(	192	,	92.95	)
(	193	,	92.01	)
(	194	,	91.07	)
(	195	,	90.10	)
(	196	,	89.12	)
(	197	,	88.13	)
(	198	,	87.13	)
(	199	,	86.11	)
(	200	,	85.07	)
(	201	,	84.02	)
(	202	,	82.95	)
(	203	,	81.87	)
(	204	,	80.77	)
(	205	,	79.65	)
(	206	,	78.51	)
(	207	,	77.36	)
(	208	,	76.19	)
(	209	,	74.99	)
(	210	,	73.78	)
(	211	,	72.55	)
(	212	,	71.30	)
(	213	,	70.03	)
(	214	,	68.74	)
(	215	,	67.42	)
(	216	,	66.07	)
(	217	,	64.71	)
(	218	,	63.32	)
(	219	,	61.90	)
(	220	,	60.45	)
(	221	,	58.99	)
(	222	,	57.49	)
(	223	,	55.96	)
(	224	,	54.39	)
(	225	,	52.80	)
(	226	,	51.18	)
(	227	,	49.51	)
(	228	,	47.81	)
(	229	,	46.08	)
(	230	,	44.30	)
(	231	,	42.48	)
(	232	,	40.60	)
(	233	,	38.69	)
(	234	,	36.71	)
(	235	,	34.67	)
(	236	,	32.55	)
(	237	,	30.38	)
(	238	,	28.09	)
(	239	,	25.70	)
(	240	,	23.14	)
(	241	,	20.41	)
(	242	,	17.42	)
(	243	,	14.02	)
(	244	,	9.96	)        
            };
            \end{axis}
 \end{tikzpicture}
  }
  \caption{Density of states for random satisfiability instances with varying size and clause density. The X-axis is the number of unsat clauses, Y-axis is the DOS showing number of assignments in log scale. (a) $N=125,\,\,\alpha = 5.0$, (b) $N=150,\,\, \alpha=4.25$, (c) $N= 200,\,\,\alpha = 4$.}
  \label{fig:dos_mcmc}
\end{figure*}

\begin{figure*}[!htp]
  \centering
  \subfigure[]{\includegraphics[scale=0.17]{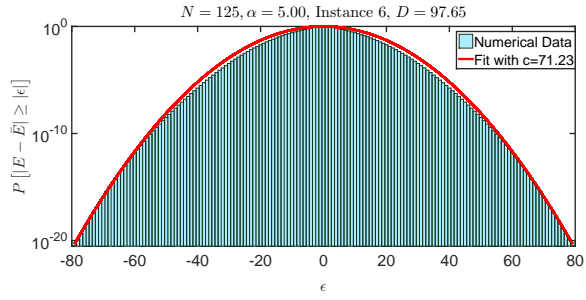}}
  \subfigure[]{\includegraphics[scale=0.17]{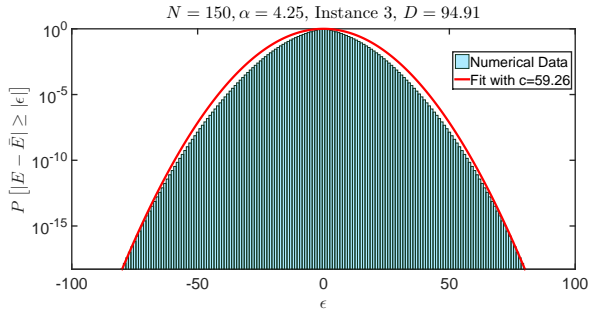}}
  \subfigure[]{\includegraphics[scale=0.17]{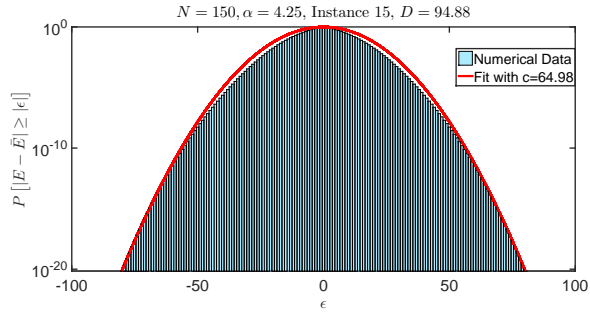}}
  \subfigure[]{\includegraphics[scale=0.17]{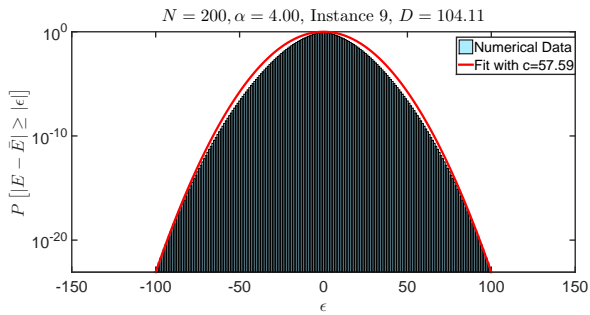}}
  \caption{Comparison of the density of states computed by MCMC and the concentration of measure bounds.\label{fig:dos_conc} }
\end{figure*}

\begin{figure}[!htp]
 \centering
\includegraphics[scale=0.18]{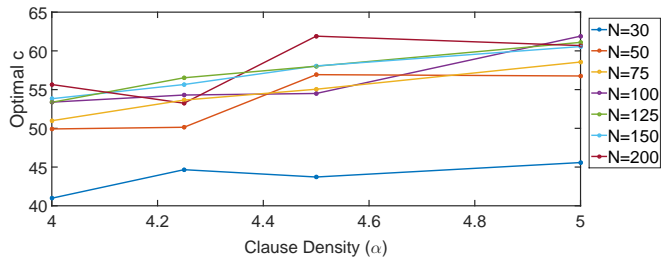}
  \caption{Concentration of measure constant as a function of $\alpha$. The different lines correspond to different values of N.\label{fig:c_alpha}}
\end{figure}

\begin{figure*}[!htp]
  \centering
  \subfigure[Instances for $N=100,\,\, \alpha=4.00$]{\includegraphics[scale=0.15]{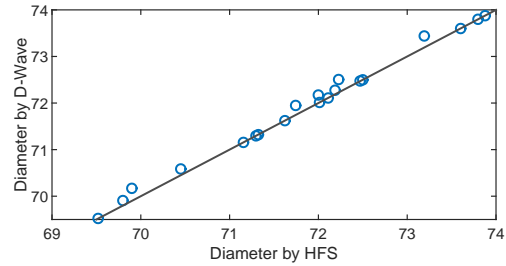}}
  \subfigure[Instances for $N=125,\,\, \alpha=5.00$]{\includegraphics[scale=0.15]{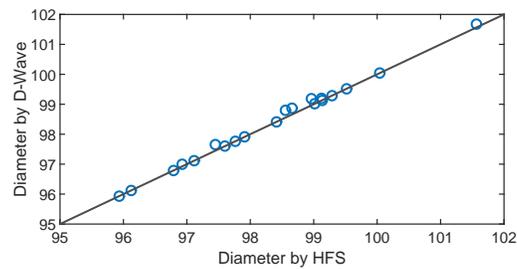}}
  \subfigure[Instances for $N=150,\,\,\alpha=4.25$]{\includegraphics[scale=0.15]{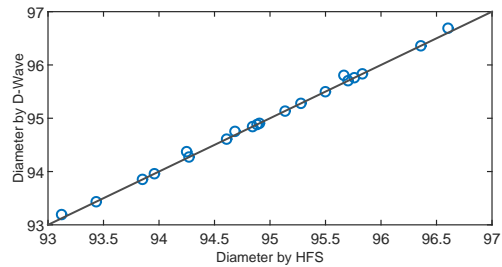}}
  \subfigure[Instances for $N=200,\,\,\alpha=4.00$]{\includegraphics[scale=0.15]{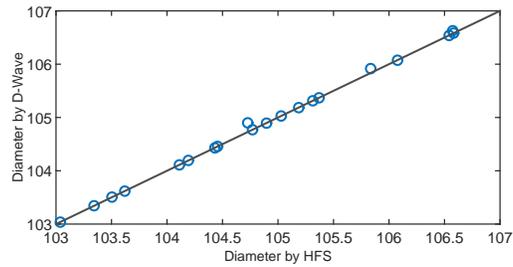}}
  \caption{Comparison of diameters computed by the D-Wave quantum annealer and HFS algorithm.\label{fig:comparisondwave}}
\end{figure*}

\begin{figure*}[!htp]
  \centering
  \subfigure[Instances for $N=150\,\,\alpha=4.25$ ]{\includegraphics[scale=0.15]{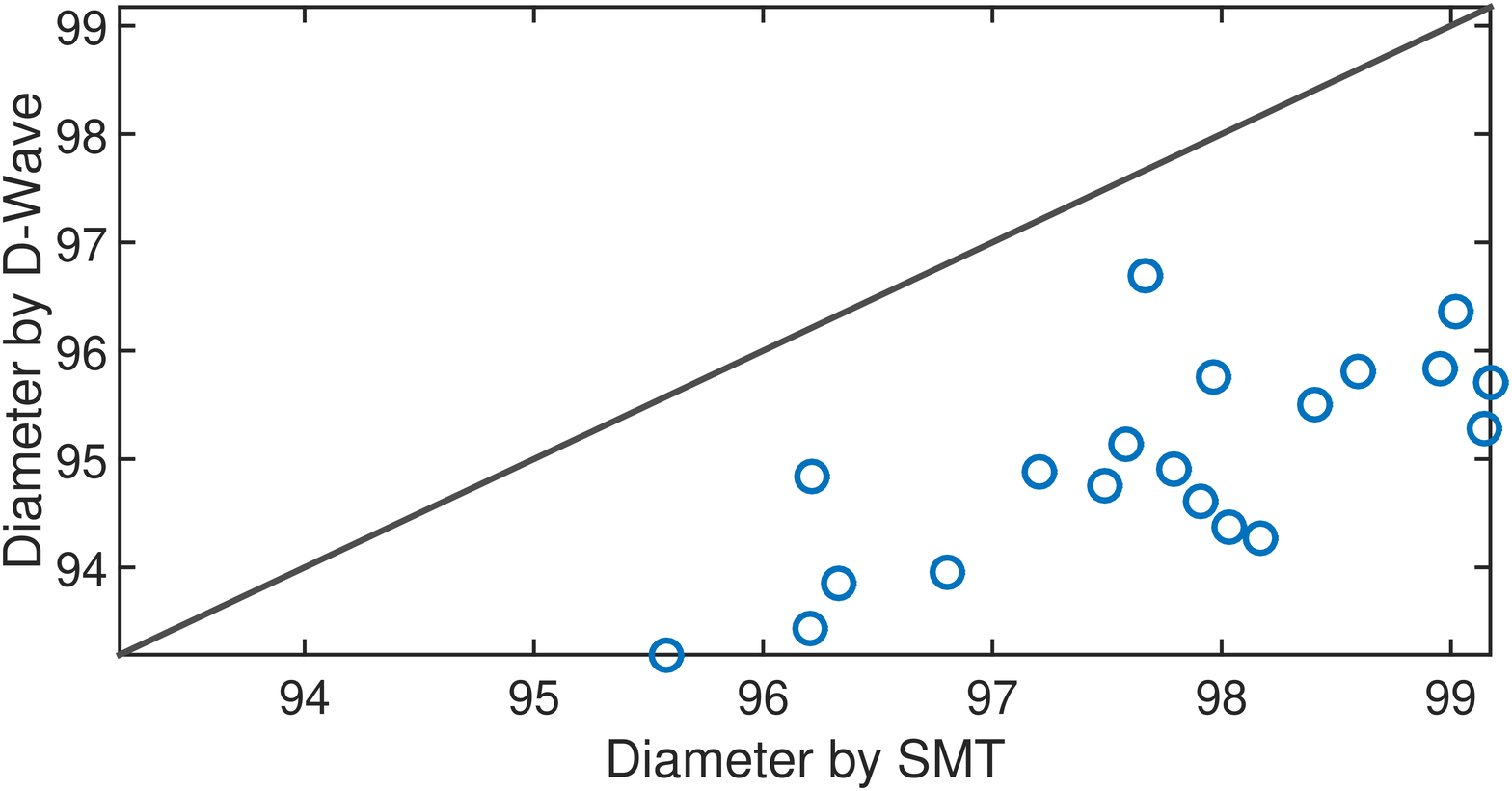}}
    \subfigure[Instances for $N=200\,\,\alpha=4.00$]{\includegraphics[scale=0.15]{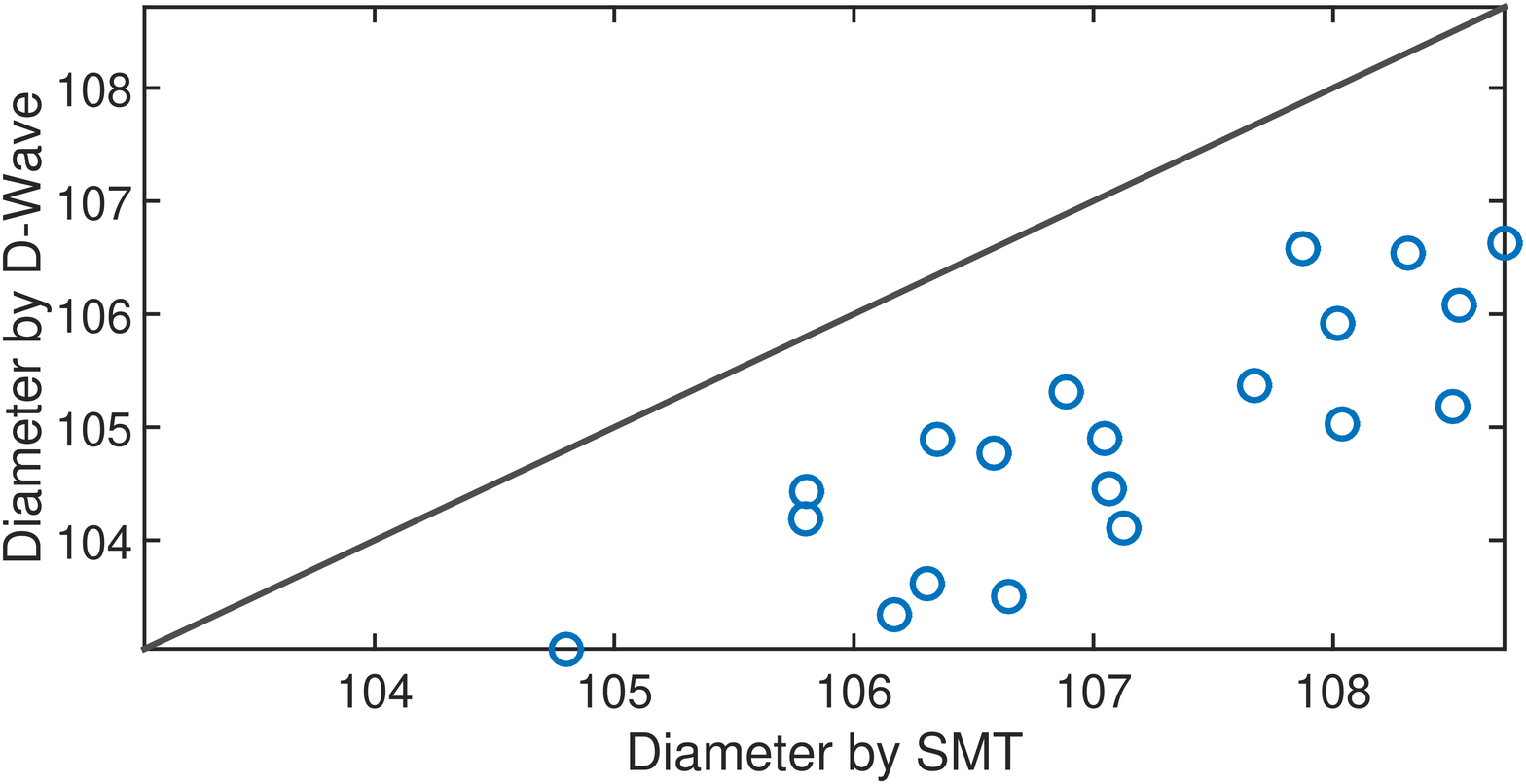}}
  \caption{Comparison of diameters computed by the D-Wave quantum annealer and SMT solver.\label{fig:comparisondwaveSMT}}
\end{figure*}


In our experiments, we attempt to answer the following questions.
\begin{itemize} 
\item How does the proposed DOS approach of using concentration of measures and QUBO compare with the baseline MCMC-FlatSAT approach~\cite{ermon2010computing,ermon2011flat}?
\item How do the quantum and classical implementations of the proposed DOS approach compare with one another?
\end{itemize}

To analyze the performance of the proposed approach, we generate $20$ random $3$-SAT instances for every possible combination of the following sizes ($N=\left[30, 50, 75, 100, 125, 150, 200\right]$) and clause density ($\alpha=\left[4.0, 4.25, 4.5, 5\right]$). As mentioned earlier, although the proposed technique can also be easily applied to non-random SAT instances, the choice of random SAT instances allows variation from easy-to-hard problems. Our choice of $\alpha$ values span the phase transition at $\alpha\approx 4.24$ that demarcates ``easy'' and ``hard'' instances of the satisfiability problem~\cite{monasson1999determining}. Thus, in total, we generate $560$ random $3$-SAT instances, and for each instance we compute the baseline DOS using MCMC-FlatSAT. These results are then compared with the proposed concentration of measure inequality approach. Additionally, the $2N$ diameter computations for each instance are performed classically using the HFS algorithm and the D-Wave quantum annealer. The performance of the D-Wave device is then compared with the classical results. We also implemented an SMT-based optimization approach on classical platforms, and compared the D-Wave results with the standard classical solver.

\subsection{MCMC-FlatSAT results}
We implemented the MCMC-FlatSAT~\cite{ermon2010computing,ermon2011flat} algorithm in C++. 
Depending on the mixing time of the Markov chain~\cite{levin2017markov}, there was a large variation in the performance of the code. The computation time ranged from hours to several days (in some instances the code took $3-10$ days to converge). The computations were performed for all of the $560$ instances as outlined above. A few instances of the resulting density of states are shown in Fig.~\ref{fig:dos_mcmc}. In general, the MCMC approach was found to have a high computational cost.

\subsection{SMT results}
We used the Z3 SMT solver~\cite{bjorner2015nuz} to encode the QUBO problem as a bitvector problem exploiting the fixed range of discrete values that can be taken by the diameters. The resulting problem is a pseudo-Boolean optimization problem that we solve iteratively using satisfiability solving by binary search (between $0$ and the number of clauses in which the variable occurs) over the optimization goal. We allow the SMT solver a timeout of $100$ seconds for every trial to find a larger diameter. The $560$ instances took $8$ days to compute. The SMT solver found better solutions for the QUBO compared to quantum annealing, and hence placed more accurate bounds on the histogram. However, the scalability declined sharply with the increase in the number of variables. In particular, we found that the Z3 solver took multiple hours to days to complete several instances. The diameters computed using the SMT solver can be seen in Fig.~\ref{fig:comparisondwaveSMT} for a few values of $N$ and $\alpha$. 

\subsection{Quantum annealing results}
We use the D-Wave 2X (DW2X) annealer~\cite{johnson_2011_dwave} located at the USC Information
Science Institute in Marina del Rey as our quantum platform for computations. This DW2X processor is an $1152$-qubit quantum annealing
device made  using superconducting flux
qubits~\cite{bunyk_2014_dwave}. It has $1098$ functional qubits that 
function at $12$ mK. The annealer implements the transverse Ising
Hamiltonian,
\begin{equation}
\label{eq:dwave}
H(s) = A(s) \sum_i \sigx_i + B(s) \left(\sum_i h_i \sigz_i + \sum_{ij} J_{ij} \sigz_i \sigz_j \right),
\end{equation}
where $s=t/t_{\rm f}$ is the normalized time, $t_{f}$ is the total evolution time, and
$A(s)$ and $B(s)$ are the annealing schedules that modulate the transverse field and Ising
field strength, respectively. The total annealing time $t_{\mathrm{f}}$ can be set in the range
$[5,2000]\ \mu$s. The coupling strengths $J_{ij}$ between
qubits $i$ and $j$ can be set in the range $[-1,1]$, and the local fields $h_i$ can be set
in the range $[-2,2]$. Initially, $A(0)\gg B(0)$ and the system starts in the
superposition of all possible computational states. During the evolution from $s=0$ to
$s=1$, the transverse field is reduced and the Ising field strength is increased such that
$A(1) \ll B(1)$. If $t_f$ is large enough, the adiabatic theorem~\cite{born1928beweis}
guarantees that the final state of the system will be the ground state of $H(s=1)$. The device has been used for machine learning~\cite{biamonte2017quantum,adachi2015application,mott2017solving}, image recognition~\cite{neven2008image}, and combinatorial optimization~\cite{ushijima2017graph,mcgeoch2013experimental,neukart2017traffic,venturelli2015quantum} to name a few.

We use the above platform to compute the diameters for all the $560$ instances of random satisfiability problems and compared the results to MCMC-FlatSAT. As noted in remark~\ref{rem1}, each instance of an $N$-dimensional $3$-SAT problem gives rise to $2N$ optimizations for $D_i$. We chose the smallest possible annealing time $t_{f}=5\,\mu$s. For each QUBO instance of this study, we did
$1000$ readouts with $10$ gauge transforms
each~\cite{boixo_2014_evidencequantum}. Additional details of this particular process can
be found in, for example, Ref.~\cite{mishra_2018_finitetemperature}. Note that the total wall clock time to optimize each instance, which includes overheads such as initializing the qubits and measurements, was $\approx0.1$ second. Additionally, $1000$ readouts are on the low side;  however, we were restricted due to the sheer number of QUBOs ($\approx 120,000$ instances) coupled with limited affiliate time on the DW2X annealer. 

We map each diameter computation to a QUBO. 
\begin{align}
  \label{eq:diameter_quadratic}
  \begin{split}
    Q_{i}(\vec{x}) = &\sum_{p\in S_i^+} \left[ 1 - x_{p_2} - x_{p_3} + x_{p_2} x_{p_3} \right]  \\
    -&\sum_{p\in S_i^-} \left[ 1 - x_{p_2} - x_{p_3} + x_{p_2} x_{p_3} \right].  
  \end{split}
\end{align}

The size of this QUBO problem depends on the size of the sets $S_{i}^{+}$ and $S_{i}^{-}$
(see Section~\ref{sec:qubo} of the paper). If the SAT problem has $N$ variables and $\alpha$ clause density,
the number of clauses $M=N\alpha$ is a loose upper bound on the size of these QUBO
problems. Since the clauses can contain arbitrary variables, for which DW2X has a finite
connectivity graph, we need to find a minor embedding of the D-Wave graph that can fit this
QUBO problem~\cite{choi_2008_embed1,choi_2011_embed2}. In such embedding, each
$x_{p_i}$ in Eqn.~\ref{eq:diameter_quadratic} is represented by a chain of physical qubits 
connected via an ferromagnetic couplings. We used the \texttt{sapiFindEmbedding} function
provided by the D-Wave application program interfaces (API) to find such embeddings. We used the \texttt{sapiEmbedProblem}
function to submit the jobs to the processor and \texttt{sapiUnembedAnswer} function with
the \texttt{minimize\_energy} option to optimally decode the embeddings back to the
variables $\vec{x}$. We used the heuristic ferromagnetic chain coupling provided by the API. To find higher-quality solutions, one can optimize this ferromagnetic coupling
value such that the chain of physical qubits representing each variable is
consistent at the end of the anneal. Thus, our results provide a lower bound on the diameter.  Potentially, one may be able to obtain improved results by performing the actions suggested above and optimizing the annealing process.

After computing all the $D_i$'s for a given instance, we can plot the concentration of measure bounds for the DOS. Note that in McDiarmid's inequality (Eqn.~\ref{eq:mcdiarmid}), two constants appear that can be used to make the bounds on the DOS tight. In particular, we find $C=1$ and $c = 56.16 - 12.08\exp(-0.07(N-29.78)) + 6.88(\alpha -4.46) $ give rise to very close approximations of the density of states in the range of $30\leq N \leq 200$ (as shown in Fig.~\ref{fig:dos_conc}). These parameters were computed using the Broyden-Fletcher-Goldfarb-Shanno (BFGS) algorithm~\cite{Cit:BFGS}. The functional form for $c$ was obtained empirically by finding the best $c$ for each of the $560$ instances and performing regression with respect to $N$ and $\alpha$ (see Fig.~\ref{fig:c_alpha}).

\subsection{Comparison of D-Wave and HFS}
We repeat the $D_i$ computations for each of the $560$ instances by forcing the classical device to compute the best possible solution using the HFS algorithm. We again run the computation for $0.1$ secs, repeated $1000$ times for each $D_i$. The best $D_i$ is saved and the rest are discarded. We then compare the $D_i$ values obtained using quantum annealing with those computed classically. Note that higher diameter values correspond to ``better'' solutions as they correspond to higher quality solutions of the QUBO. Note we intend to conduct a comprehensive benchmarking study for the D-Wave quantum annealer~\cite{albash2018demonstration} (using our DOS framework) in future work.

Out of the $560$ random satisfiability instances, the D-Wave quantum computer computes higher quality solutions (higher diameter values) in $306$ instances. However, as shown in Fig.~\ref{fig:comparisondwave}, the D-Wave provided a marginal improvement on the diameter values. In particular, we found that the average solution computed by the D-Wave machine was around $0.7\%$ higher than the HFS algorithm. The most favorable result for the D-Wave was the computation of a solution that was $7\%$ better than the HFS algorithm. Whether this improvement holds for larger instances remains to be seen and will be tested in higher qubit settings. We would like to point out, however, that in no instance did the HFS algorithm find a higher quality solution when compared to the D-Wave machine. As shown in Fig.~\ref{fig:comparisondwaveSMT}, the SMT solver does find significantly better solutions than the D-Wave machine. Note that the computational cost of the solver is significantly higher (taking hours to days to compute the diameter of some instances).

\section{Conclusion}
\label{conclusion}
In this work, we have developed a novel concentration of measure inequalities-based approach for estimating the density of states of the $k$-SAT problem. Existing state-of-the-art Markov Chain Monte Carlo methods for computing density of states are stymied by computational intractability. Our approach provides estimates for the density of states histogram by converting the problem into a set of optimization problems that bound the maximum variability of the cost function (called ``diameters''). For the $3$-SAT, these optimization problems elegantly reduce to quadratic unconstrained binary optimizations, thereby making them amenable for commercially available quantum annealers such as the D-Wave machine.

To test our approach, we computed diameters for a range of random instances that span the phase transition of the $3$-SAT. We used the D-Wave quantum annealer and computed bounds on the density of states for those problem instances. We find that by tuning a single parameter (exponent), one can \emph{accurately estimate the entire density of states histogram}, orders of magnitude faster than state-of-the-art Markov Chain Monte Carlo methods. Additionally, we compared the D-Wave diameter computations to equivalent classical techniques such as SMT solvers and the HFS algorithm. We find that the D-Wave machine provides a marginal improvement in the diameter computations over the HFS algorithm. In other words, the D-Wave annealer computes solutions that correspond to larger values of the diameter. The SMT solver works particularly well for small problem instances. However, we find this approach does not scale well and is not competitive (from a computational standpoint) for larger instances of the 3-SAT. 

In summary, we propose a new approach for estimating density of states of the $k$-SAT problem that can be implemented on both classical and quantum platforms. Moreover, the problem is a particularly interesting test for comparing quantum platforms (annealers and other noisy intermediate-scale quantum devices) to classical computation. This is because the diameter values provide a real number metric for comparison. In other words, the quality of solution is a real number as opposed to the standard satisfiability tests for benchmarking quantum devices that yield inconclusive results of the form ``no satisfiable assignments found'' for most instances. We hope this problem and the outlined approach can be used to analyze complex aerospace system requirements from a satisfiability standpoint as well as to test emerging quantum platforms against their classical counterparts. DOS estimation can be used for probabilistic inference and, thus, an efficient quantum algorithm for the density of states estimation will enable development of quantum artificial intelligence. 

In future work, we plan to expand the use of concentration of measure inequalities for a wider set of problems and compare the above approaches with simulated annealing-based methods~\cite{aarts1988simulated}.
\section{Acknowledgements}
The authors thank Federico Spedalieri and Daniel Lidar of the University of Southern California-Information Sciences Institute (USC-ISI) for discussions and suggestions. The authors thank Lockheed Martin's quantum computing team (Christopher Elliott, Greg Tallant, and Kristen Pudenz) for discussions related to the approach and generously providing affiliate access to their D-Wave quantum annealer. Dr. Jha acknowledges support from the US Army Research Laboratory Cooperative Research Agreement W911NF-17-2-0196, and National Science Foundation (NSF) grants \#1750009, and \#1740079. The views, opinions and/or findings expressed are those of the author(s) and should not be interpreted as representing the official views or policies of the Department of Defense or the U.S. Government.


\vskip 0.2in
\bibliography{./bib/main}
\bibliographystyle{theapa}

\end{document}